\documentclass[12pt]{article}
\usepackage{amsthm,amsmath,amssymb}
\usepackage{multirow}

\theoremstyle{plain}
\newtheorem{theorem}{Theorem}
\newtheorem{lemma}[theorem]{Lemma}

\title{A New Lower Bound for the Ramsey Number $R(4, 8)$}

\author{Hiroshi Fujita\\
\small Department of Informatics\\[-0.8ex]
\small Kyushu University\\[-0.8ex] 
\small 744 Motooka, Nishi-ku, Fukuoka 819-0395 Japan\\
\small\tt fujita@inf.kyushu-u.ac.jp}

\date{}

\begin{document}

\maketitle

\begin{abstract}
The lower bound for the classical Ramsey number $R(4, 8)$ is improved
from 56 to 58.
The author has found a new edge coloring of $K_{57}$
that has no complete graphs of order 4 in the first color,
and no complete graphs of order 8 in the second color.
The coloring was found using a SAT solver
which is based on MiniSat and customized for solving Ramsey problems.
\end{abstract}

Recently Exoo improved the lower bound for the classical Ramsey number
$R(4, 6)$ \cite{R_4_6}.
This note deals with a new lower bound for $R(4, 8)$.
The classical Ramsey number $R(s, t)$ is the smallest integer $n$
such that in any two-coloring of the edges of $K_n$
there is a monochromatic copy of $K_s$ in the first color or
a monochromatic copy of $K_t$ in the second color.
Some of the interesting instances can be found at Exoo's web site \cite{Exoo}
and at McKay's web site \cite{McKay}.
A recent summary of the state of the art for Ramsey numbers
can be found in the Dynamic Survey \cite{DS}.

Exoo writes that some unsettled cases for two color classical Ramsey numbers
such as $R(4, 6), R(3, 10),$ and $R(5, 5)$
can only be solved by using computer methods.
The author think SAT solvers can be one of the promising tools
to do this kind of work.
Here we try to obtain a Ramsey graph $R(s, t, n)$ \cite{McKay}
by encoding the condition for $R(s, t, n)$ to exist
into a conjunctive normal form (CNF), called Ramsey clauses, as follows:
\[ C_{(s, t, n)}:
\Big( \bigwedge_{K_{s}\subset K_{n}}
 \vee_{e_{ij}\in K_{s}}
 ~\neg e_{ij} \Big)
\wedge
\Big( \bigwedge_{K_{t}\subset K_{n}}
 \vee_{e_{ij}\in K_{t}}
 ~e_{ij} \Big)
\]
\noindent
where each $e_{ij}$ is the propositional variable, called Ramsey variables,
for the edge between vertices $i$ and $j$,
and assigning $true$ to it means the edge is colored in the first color,
otherwise the second.
If $C_{(s, t, n)}$ is unsatisfiable, then no $R(s, t, n)$ exists.
Otherwise a model representing a $R(s, t, n)$ can be obtained.
To determine whether $C_{(s, t, n)}$ is satisfiable or not,
and to obtain a model when it is satisfiable,
one can use a state of the art SAT solver such as MiniSat \cite{MiniSat}.

However, it is extremely difficult to achieve our task
using such a na\"ive setting of the problem as above,
since the search space becomes enormous.
In fact, the number of variables is $\binom{n}{2}$ and
the number of clauses is $\binom{n}{s}+\binom{n}{t}$,
thus for instance,
903 variables and 1925196 clauses for $C_{(5,5,43)}$
which is fairly large, though still manageable.
Those for other interesting cases, say $C_{(3,10,40)}$,
are beyond our reach.

So we consider imposing some constraints on the problem
in order for it to be simplified,
and the search space of which being reduced.
One of the most straightforward yet significantly effective constraints is:
\[ e_{ij} \equiv z_{k} \quad (0\le i<j<n,~ j-i=k) \]
\noindent
where $z_{k}~(1\le k\le n-1)$ are new propositional variables,
called Z-variables.
The constraint, called Z-constraint, is represented in CNF,
called Z-clauses, as follows:
\[ Z_{n}:
  \bigwedge_{0\le i<j<n}
\big(
  (\neg e_{ij} \vee z_{j-i})
\wedge
  (e_{ij} \vee \neg z_{j-i})
\big)
\]

\begin{figure}[!ht]
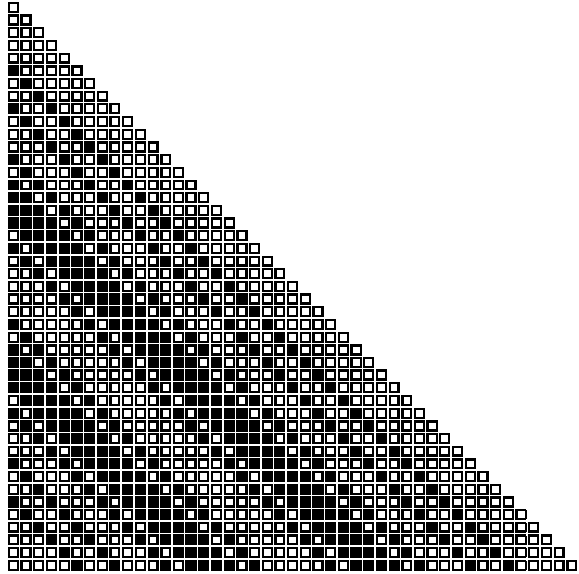

\begin{center}
\begin{minipage}{0.6\textwidth}
\setlength{\parindent}{0pt}
\setlength{\baselineskip}{0pt}
\newcommand*{\x}{{\fboxsep 0pt\fbox{\rule{3pt}{3pt}}\hspace{1pt}}}
\newcommand*{\q}{{\fboxsep 0pt\fbox{\rule{0pt}{3pt}\rule{3pt}{0pt}}\hspace{1pt}}}
\newcommand*{\mynl}{\hfil\newline}
\q\mynl
\q\q\mynl
\q\q\q\mynl
\q\q\q\q\mynl
\q\q\q\q\q\mynl
\x\q\q\q\q\q\mynl
\q\x\q\q\q\q\q\mynl
\q\q\x\q\q\q\q\q\mynl
\x\q\q\x\q\q\q\q\q\mynl
\q\x\q\q\x\q\q\q\q\q\mynl
\q\q\x\q\q\x\q\q\q\q\q\mynl
\q\q\q\x\q\q\x\q\q\q\q\q\mynl
\x\q\q\q\x\q\q\x\q\q\q\q\q\mynl
\q\x\q\q\q\x\q\q\x\q\q\q\q\q\mynl
\x\q\x\q\q\q\x\q\q\x\q\q\q\q\q\mynl
\x\x\q\x\q\q\q\x\q\q\x\q\q\q\q\q\mynl
\x\x\x\q\x\q\q\q\x\q\q\x\q\q\q\q\q\mynl
\x\x\x\x\q\x\q\q\q\x\q\q\x\q\q\q\q\q\mynl
\q\x\x\x\x\q\x\q\q\q\x\q\q\x\q\q\q\q\q\mynl
\x\q\x\x\x\x\q\x\q\q\q\x\q\q\x\q\q\q\q\q\mynl
\q\x\q\x\x\x\x\q\x\q\q\q\x\q\q\x\q\q\q\q\q\mynl
\q\q\x\q\x\x\x\x\q\x\q\q\q\x\q\q\x\q\q\q\q\q\mynl
\q\q\q\x\q\x\x\x\x\q\x\q\q\q\x\q\q\x\q\q\q\q\q\mynl
\q\q\q\q\x\q\x\x\x\x\q\x\q\q\q\x\q\q\x\q\q\q\q\q\mynl
\q\q\q\q\q\x\q\x\x\x\x\q\x\q\q\q\x\q\q\x\q\q\q\q\q\mynl
\x\q\q\q\q\q\x\q\x\x\x\x\q\x\q\q\q\x\q\q\x\q\q\q\q\q\mynl
\q\x\q\q\q\q\q\x\q\x\x\x\x\q\x\q\q\q\x\q\q\x\q\q\q\q\q\mynl
\x\q\x\q\q\q\q\q\x\q\x\x\x\x\q\x\q\q\q\x\q\q\x\q\q\q\q\q\mynl
\x\x\q\x\q\q\q\q\q\x\q\x\x\x\x\q\x\q\q\q\x\q\q\x\q\q\q\q\q\mynl
\x\x\x\q\x\q\q\q\q\q\x\q\x\x\x\x\q\x\q\q\q\x\q\q\x\q\q\q\q\q\mynl
\x\x\x\x\q\x\q\q\q\q\q\x\q\x\x\x\x\q\x\q\q\q\x\q\q\x\q\q\q\q\q\mynl
\q\x\x\x\x\q\x\q\q\q\q\q\x\q\x\x\x\x\q\x\q\q\q\x\q\q\x\q\q\q\q\q\mynl
\x\q\x\x\x\x\q\x\q\q\q\q\q\x\q\x\x\x\x\q\x\q\q\q\x\q\q\x\q\q\q\q\q\mynl
\q\x\q\x\x\x\x\q\x\q\q\q\q\q\x\q\x\x\x\x\q\x\q\q\q\x\q\q\x\q\q\q\q\q\mynl
\q\q\x\q\x\x\x\x\q\x\q\q\q\q\q\x\q\x\x\x\x\q\x\q\q\q\x\q\q\x\q\q\q\q\q\mynl
\q\q\q\x\q\x\x\x\x\q\x\q\q\q\q\q\x\q\x\x\x\x\q\x\q\q\q\x\q\q\x\q\q\q\q\q\mynl
\x\q\q\q\x\q\x\x\x\x\q\x\q\q\q\q\q\x\q\x\x\x\x\q\x\q\q\q\x\q\q\x\q\q\q\q\q\mynl
\q\x\q\q\q\x\q\x\x\x\x\q\x\q\q\q\q\q\x\q\x\x\x\x\q\x\q\q\q\x\q\q\x\q\q\q\q\q\mynl
\q\q\x\q\q\q\x\q\x\x\x\x\q\x\q\q\q\q\q\x\q\x\x\x\x\q\x\q\q\q\x\q\q\x\q\q\q\q\q\mynl
\x\q\q\x\q\q\q\x\q\x\x\x\x\q\x\q\q\q\q\q\x\q\x\x\x\x\q\x\q\q\q\x\q\q\x\q\q\q\q\q\mynl
\q\x\q\q\x\q\q\q\x\q\x\x\x\x\q\x\q\q\q\q\q\x\q\x\x\x\x\q\x\q\q\q\x\q\q\x\q\q\q\q\q\mynl
\q\q\x\q\q\x\q\q\q\x\q\x\x\x\x\q\x\q\q\q\q\q\x\q\x\x\x\x\q\x\q\q\q\x\q\q\x\q\q\q\q\q\mynl
\q\q\q\x\q\q\x\q\q\q\x\q\x\x\x\x\q\x\q\q\q\q\q\x\q\x\x\x\x\q\x\q\q\q\x\q\q\x\q\q\q\q\q\mynl
\q\q\q\q\x\q\q\x\q\q\q\x\q\x\x\x\x\q\x\q\q\q\q\q\x\q\x\x\x\x\q\x\q\q\q\x\q\q\x\q\q\q\q\q\mynl
\q\q\q\q\q\x\q\q\x\q\q\q\x\q\x\x\x\x\q\x\q\q\q\q\q\x\q\x\x\x\x\q\x\q\q\q\x\q\q\x\q\q\q\q\q\mynl
\end{minipage}
\end{center}
\caption{\label{fig:R_4_7_46}A $(4, 7)$-coloring of $K_{46}$
satisfying the full $Z_{46}$.}
\end{figure}

We try to solve $C_{(s, t, n)} \wedge Z_{n}$.
If it is satisfiable then $C_{(s, t, n)}$ is also satisfiable,
and the obtained model should represent a Ramsey graph $R(s, t, n)$
(called Z-Ramsey graph denoted by $R^{Z}(s, t, n)$ in the sequel).
Now, the number of variables that need to be decided is only $n-1$,
and the problem becomes drastically easier.

In this way a Ramsey graph $R^{Z}(4, 7, 46)$, for instance, was obtained
as shown in Figure \ref{fig:R_4_7_46} very easily.
The figure depicts the lower triangle of the adjacency matrix
excluding the main diagonal.
A black filled box means $true$ is assigned to a variable
$e_{ij}$, i.e. the edge between vertices $i$ and $j$
is colored in the first color,
and a white box $false$, i.e. the second color.

As a matter of fact,
instead of using a standard SAT solver,
the author used another program specially designed to obtain
a $R^{Z}(s, t, n)$,
which does not require memory space to store $C_{(s,t,n)}$ clauses for it,
but computes all of the required conditions on the fly,
thus being able to deal with problems for larger $n$,
even more than 100, within reasonable CPU time.

%\begin{table}[!hb]
\begin{table}[!b]
\caption{\label{tbl:zramsey}The largest Z-Ramsey graphs $R^Z(s,t,n)$}
\begin{center}
\begin{tabular}{|l||c|c|c|c|c|c|c|c|}
\hline
$s\,\backslash\,t$ & 3 & 4 & 5 & 6 & 7 & 8 & 9 & 10 \\ \hline\hline
\multirow{2}{*}{3}
  & 5 & 8 & 13 & 16 & 21 & 26 & 35 & 38 \\
  & 2 & 2 & 3 & 7 & 13 & 13 & 4 & 21 \\ \hline
\multirow{2}{*}{4}
  &  & 17 & 24 & 33 & 46 & 52 & 68 & 91 \\
  &  & 2 & 6 & 24 & 21 & 13 & 487 & 35 \\ \hline
\multirow{2}{*}{5}
  &  &  & 41 & 56 & 79 & \multirow{2}{*}{--} & \multirow{2}{*}{--} & \multirow{2}{*}{--} \\
  &  &  & 22 & 12 & 49 &  &  &  \\ \hline
\multirow{2}{*}{6}
  &  &  &  & 101 & \multirow{2}{*}{--} & \multirow{2}{*}{--} & \multirow{2}{*}{--} & \multirow{2}{*}{--} \\
  &  &  &  & 2 &  &  &  &  \\ \hline
\multicolumn{9}{l}{top: the largest $n$, bottom: \# of graphs}
\end{tabular}
\end{center}
\end{table}

Table \ref{tbl:zramsey} shows
the largest $n$'s for Z-Ramsey graphs $R^Z(s,t,n)$ and their numbers
obtained by the special program.
Consider a special Z-constraint denoted by Z$^s$
by adding further constraint as follows:
\[ z_{n-k} \equiv z_{k} \quad (1\le k<n) \]
Then, for all entries in the table, except for $R^Z(4,8,52)$,
there exist the largest Z-Ramsey graphs satisfying Z$^s$.

It turns out that there is no model for $C_{(4,7,n)} \wedge Z_{n},~n>46$.
Still the author expect that there exists one
which satisfies large parts of $Z_{n}$ and is highly symmetric.
For this, we somehow relax the constraint $Z_{n}$ in the following way.
\begin{itemize}
\item
Each time a Z-clause causes a conflict during a search,
increase its penalty score,
and continue the search disregarding the conflict.
\item
When a search fails after all,
pick up some number of Z-clauses having higher penalty score,
remove them from the current set of Z-clauses,
then restart a new search with the relaxed Z-constraint.
\item
Repeat the above relaxation-and-restarts until
a solution is obtained or the attempt totally fails
with whole Z-clauses being removed.
\end{itemize}
Thus the author obtained a $R(4,7,47)$ and a $R(4,7,48)$.

%% 2013.03.27

If $Z_{n}$ itself seems too strong, one may relax it
manually in several ways as follows:
\begin{itemize}
\item {\bf imperfect Z.}~~
Omitting $Z_n$ clauses for some $n$'s.
For instance, a Ramsey graph $R(4,7,48)$ can be obtained
rather easily by omitting only such constraints that refer to $Z_{10}$.
\item {\bf partitioned Z.}~~
Using $\{Z_{n}\}_{p}$, several sets of $Z_{n}$ variables,
where $p$ is a function of $(row,column)$ of the adjacency matrix
for the graph under consideration.
For instance, a Ramsey graph $R(4,7,48)$ would be obtained
by using the following $p$.
\[
p(row, column) =
\begin{cases}
 0 & ( 24\le row\le 33  ) \\
 1 & ( otherwise )
\end{cases}
\]
\end{itemize}

\begin{lemma}
  \label{Lem:R_4_7_48}
  $R(4, 7)\ge 49$
\end{lemma}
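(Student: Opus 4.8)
The plan is to prove the bound constructively. By the definition of the Ramsey number, $R(4,7)\ge 49$ holds exactly when there is a two-coloring of the edges of $K_{48}$ containing no $K_4$ in the first color and no $K_7$ in the second color; that is, when a Ramsey graph $R(4,7,48)$ exists. So it suffices to exhibit one such coloring, after which the lemma follows immediately.

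To produce the coloring I would run the relaxed Z-constraint search described above. Since $C_{(4,7,n)}\wedge Z_n$ has no model for $n>46$, the full Z-constraint cannot yield a graph on 48 vertices, so one of the manual relaxations is needed. Concretely, either the imperfect-Z route, dropping the Z-clauses that refer to $Z_{10}$, or the partitioned-Z route with the displayed partition function $p$ that isolates rows 24 through 33, loosens the problem enough for the SAT solver to return a model. The resulting assignment colors each of the $\binom{48}{2}$ edges and would be recorded as a lower-triangular adjacency matrix in the same format as Figure \ref{fig:R_4_7_46}.

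The remaining step is verification, which is a finite, purely mechanical check: that none of the $\binom{48}{4}$ four-element vertex sets induces a first-color $K_4$, and none of the $\binom{48}{7}$ seven-element sets induces a second-color $K_7$. Because only Z-clauses are relaxed while the Ramsey clauses $C_{(4,7,48)}$ are kept intact, any returned model satisfies $C_{(4,7,48)}$ in full; by the correspondence noted earlier between models of $C_{(s,t,n)}$ and Ramsey graphs, both conditions then hold automatically. The displayed coloring is therefore a valid $R(4,7,48)$, and $R(4,7)\ge 49$ follows.

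The main obstacle is the search rather than the verification. The real difficulty is choosing which Z-clauses to relax so that the instance stays small enough to remain tractable yet gains enough freedom to admit a solution on 48 vertices; once a model is found, everything downstream is routine. The two relaxation recipes given above are precisely the heuristics that resolve this tension, so the entire burden of the argument is concentrated in that search phase.
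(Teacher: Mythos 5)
Your approach matches the paper's exactly: the lemma is proved by exhibiting an explicit $(4,7)$-coloring of $K_{48}$ found via the relaxed Z-constraint SAT search, and the paper's proof simply points to such a coloring (Figure~\ref{fig:R_4_7_48}), with correctness reducing to the mechanical check you describe. The only thing your write-up lacks is the witness itself, which is precisely what the figure supplies.
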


\begin{proof}
The proof is given by the coloring of $K_{48}$
which can be derived from Figure \ref{fig:R_4_7_48}.
\end{proof}

\begin{figure}[!ht]
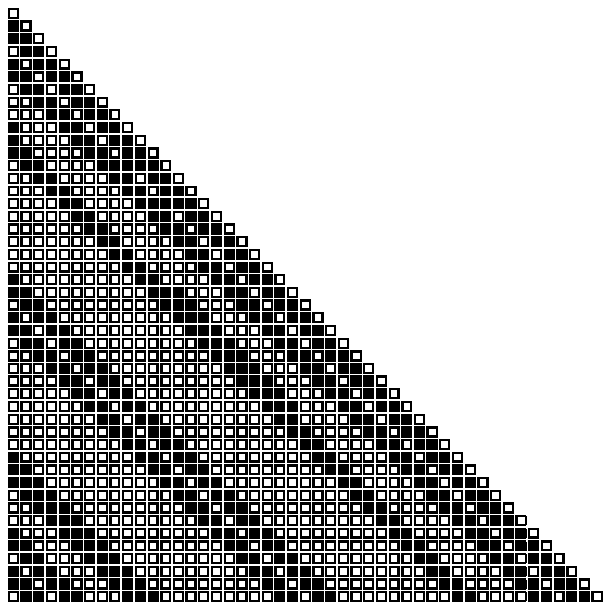

\begin{center}
\begin{minipage}{0.6\textwidth}
\setlength{\parindent}{0pt}
\setlength{\baselineskip}{0pt}
\newcommand*{\x}{{\fboxsep 0pt\fbox{\rule{3pt}{3pt}}\hspace{1pt}}}
\newcommand*{\q}{{\fboxsep 0pt\fbox{\rule{0pt}{3pt}\rule{3pt}{0pt}}\hspace{1pt}}}
\newcommand*{\mynl}{\hfil\newline}
\q\mynl
\x\q\mynl
\x\x\q\mynl
\q\x\x\q\mynl
\x\q\x\x\q\mynl
\x\x\q\x\x\q\mynl
\q\x\x\q\x\x\q\mynl
\q\q\x\x\q\x\x\q\mynl
\q\q\q\x\x\q\x\x\q\mynl
\x\q\q\q\x\x\q\x\x\q\mynl
\x\q\q\q\q\x\x\q\x\x\q\mynl
\x\x\q\q\q\q\x\x\q\x\x\q\mynl
\q\x\x\q\q\q\q\x\x\x\x\x\q\mynl
\q\q\x\x\q\q\q\q\x\x\q\x\x\q\mynl
\q\q\q\x\x\q\q\q\q\x\x\q\x\x\q\mynl
\q\q\q\q\x\x\q\q\q\q\x\x\x\x\x\q\mynl
\q\q\q\q\q\x\x\q\q\q\q\x\x\q\x\x\q\mynl
\q\q\q\q\q\q\x\x\q\q\q\q\x\x\q\x\x\q\mynl
\q\q\q\q\q\q\q\x\x\q\q\q\q\x\x\q\x\x\q\mynl
\q\q\q\q\q\q\q\q\x\x\q\q\q\q\x\x\q\x\x\q\mynl
\q\q\q\q\q\q\q\q\q\x\x\q\q\q\q\x\x\q\x\x\q\mynl
\x\q\q\q\q\q\q\q\q\q\x\x\q\q\q\q\x\x\q\x\x\q\mynl
\x\x\q\q\q\q\q\q\q\q\q\x\x\x\q\q\q\x\x\q\x\x\q\mynl
\q\x\x\q\q\q\q\q\q\q\q\q\x\x\x\q\q\q\x\x\q\x\x\q\mynl
\x\q\x\x\q\q\q\q\q\q\q\q\q\x\x\x\q\q\q\x\x\q\x\x\q\mynl
\x\x\q\x\x\q\q\q\q\q\q\q\q\q\x\x\x\q\q\q\x\x\q\x\x\q\mynl
\q\x\x\q\x\x\q\q\q\q\q\q\q\q\q\x\x\x\q\q\q\x\x\q\x\x\q\mynl
\q\q\x\x\q\x\x\q\q\q\q\q\q\q\q\q\x\x\x\q\q\q\x\x\q\x\x\q\mynl
\q\q\q\x\x\q\x\x\q\q\q\q\q\q\q\q\q\x\x\x\q\q\q\x\x\q\x\x\q\mynl
\q\q\q\q\x\x\q\x\x\q\q\q\q\q\q\q\q\q\x\x\x\q\q\q\x\x\q\x\x\q\mynl
\q\q\q\q\q\x\x\q\x\x\q\q\q\q\q\q\q\q\q\x\x\x\q\q\q\x\x\q\x\x\q\mynl
\q\q\q\q\q\q\x\x\q\x\x\q\q\q\q\q\q\q\q\q\x\x\x\q\q\q\x\x\q\x\x\q\mynl
\q\q\q\q\q\q\q\x\x\q\x\x\q\q\q\q\q\q\q\q\q\x\x\q\q\q\q\x\x\q\x\x\q\mynl
\q\q\q\q\q\q\q\q\x\x\q\x\x\q\q\q\q\q\q\q\q\q\x\x\q\q\q\q\x\x\q\x\x\q\mynl
\q\q\q\q\q\q\q\q\q\x\x\q\x\x\q\q\q\q\q\q\q\q\q\x\x\q\q\q\q\x\x\q\x\x\q\mynl
\x\q\q\q\q\q\q\q\q\q\x\x\q\x\x\q\q\q\q\q\q\q\q\q\x\x\q\q\q\q\x\x\q\x\x\q\mynl
\x\x\q\q\q\q\q\q\q\q\q\x\x\q\x\x\q\q\q\q\q\q\q\q\q\x\x\q\q\q\q\x\x\q\x\x\q\mynl
\x\x\x\q\q\q\q\q\q\q\q\q\x\x\q\x\x\q\q\q\q\q\q\q\q\q\x\x\q\q\q\q\x\x\q\x\x\q\mynl
\q\x\x\x\q\q\q\q\q\q\q\q\q\x\x\q\x\x\q\q\q\q\q\q\q\q\q\x\x\q\q\q\q\x\x\q\x\x\q\mynl
\q\q\x\x\x\q\q\q\q\q\q\q\q\q\x\x\q\x\x\q\q\q\q\q\q\q\q\q\x\x\q\q\q\q\x\x\q\x\x\q\mynl
\q\q\q\x\x\x\q\q\q\q\q\q\q\q\q\x\x\q\x\x\q\q\q\q\q\q\q\q\q\x\x\q\q\q\q\x\x\q\x\x\q\mynl
\x\q\q\q\x\x\x\q\q\q\q\q\q\q\q\q\x\x\q\x\x\q\q\q\q\q\q\q\q\q\x\x\q\q\q\q\x\x\q\x\x\q\mynl
\x\x\q\q\q\x\x\x\q\q\q\q\q\q\q\q\q\x\x\q\x\x\q\q\q\q\q\q\q\q\q\x\x\q\q\q\q\x\x\q\x\x\q\mynl
\q\x\x\q\q\q\x\x\x\q\q\q\q\q\q\q\q\q\x\x\q\x\x\q\q\q\q\q\q\q\q\q\x\x\q\q\q\q\x\x\q\x\x\q\mynl
\x\q\x\x\q\q\q\x\x\x\q\q\q\q\q\q\q\q\q\x\x\q\x\x\q\q\q\q\q\q\q\q\q\x\x\q\q\q\q\x\x\q\x\x\q\mynl
\x\x\q\x\x\q\q\q\x\x\x\q\q\q\q\q\q\q\q\q\x\x\q\x\x\q\q\q\q\q\q\q\q\q\x\x\q\q\q\q\x\x\q\x\x\q\mynl
\q\x\x\q\x\x\q\q\q\x\x\x\q\q\q\q\q\q\q\q\q\x\x\q\x\x\q\q\q\q\q\q\q\q\q\x\x\q\q\q\q\x\x\q\x\x\q\mynl
\end{minipage}
\end{center}
\caption{\label{fig:R_4_7_48}A $(4, 7)$-coloring of $K_{48}$
satisfying a relaxed $Z_{48}$.}
\end{figure}

The search started with 1128 Ramsey variables + 47 Z-variables
and 73823652 Ramsey clauses + 2256 Z-clauses.
It took 47007 seconds (about 9 hours)
on x86\_64 GNU/Linux,
Intel$^{\mbox{\textregistered}}$ Xeon$^{\mbox{\textregistered}}$ CPU X5680 @ 3.33GHz
with 100 GB memory.
The relaxations of Z-clauses were invoked once, and
about 90 \% of the initial Z-clauses being kept and satisfied at last.

In some cases, a Ramsey graph $R(s',t',n')$ can be obtained easily
using a smaller $R(s, t, n)$,
such that $s\le s',~t\le t',~n<n'$, typically $s=s',~t=t'-1$.
For instance,
\begin{itemize}
\item
a $R(3,10,39)$ using a $R^Z(3,9,35)$,
\item
$\{R(3,12,49), R(3,12,50), R(3,12,51)\}$ using a $R^Z(3,11,45)$,
\item
$\{R(4,8,53), R(4,8,54), R(4,8,55)\}$ using a $R^Z(4,7,46)$,
\item
$\{R(3,20,106), R(3,20,107), R(3,20,108)\}$ using a $R^Z(3,19,105)$,
\item
a $R(4,11,97)$ using a $R^Z(4,10,91)$,
\item
$\{R(6,7,107), R(6,7,108), R(6,7,109)\}$ using a $R^Z(6,6,101)$.
\end{itemize}

Though the author has been searching for some time
a $R(4,8,56)$ using a $R^Z(4,7,46)$,
the attempt has not yet been accomplished at the time of this writing.
Instead, the author has succeeded only recently to obtain
a $R(4,8,56)$ using $R(4,7,48)$
as shown in Figure \ref{fig:R_4_7_48},
and later a $R(4,8,57)$ using another $R(4,7,48)$.

In this setting,
part of edges $e_{ij}~(0\le i<j<48)$ are colored exactly
in the same way as in the coloring for the given $R(4, 7, 48)$.
The rest of edges $e_{ij}~(0\le i<57,~48\le j<56)$ are left undecided.
Some of the Ramsey clauses will be satisfied due to the partial colorings,
and only the unsettled Ramsey clauses are given to our SAT solver.
Moreover, we impose a Z-constraint on the unsettled edges,
and perform repeated relaxations and restarts of solving
in the manner described above.

%\begin{figure}[!ht]
\begin{figure}[t]
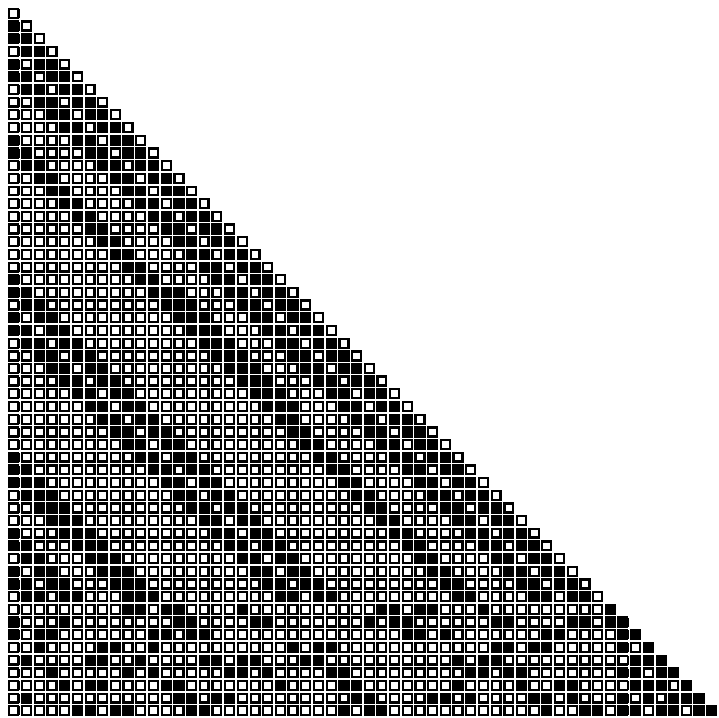

\begin{center}
\begin{minipage}{0.7\textwidth}
\setlength{\parindent}{0pt}
\setlength{\baselineskip}{0pt}
\newcommand*{\x}{{\fboxsep 0pt\fbox{\rule{3pt}{3pt}}\hspace{1pt}}}
\newcommand*{\q}{{\fboxsep 0pt\fbox{\rule{0pt}{3pt}\rule{3pt}{0pt}}\hspace{1pt}}}
\newcommand*{\mynl}{\hfil\newline}
\q\mynl
\x\q\mynl
\x\x\q\mynl
\q\x\x\q\mynl
\x\q\x\x\q\mynl
\x\x\q\x\x\q\mynl
\q\x\x\q\x\x\q\mynl
\q\q\x\x\q\x\x\q\mynl
\q\q\q\x\x\q\x\x\q\mynl
\q\q\q\q\x\x\q\x\x\q\mynl
\x\q\q\q\q\x\x\q\x\x\q\mynl
\x\x\q\q\q\q\x\x\q\x\x\q\mynl
\q\x\x\q\q\q\q\x\x\q\x\x\q\mynl
\q\q\x\x\q\q\q\q\x\x\q\x\x\q\mynl
\q\q\q\x\x\q\q\q\q\x\x\q\x\x\q\mynl
\q\q\q\q\x\x\q\q\q\q\x\x\q\x\x\q\mynl
\q\q\q\q\q\x\x\q\q\q\q\x\x\q\x\x\q\mynl
\q\q\q\q\q\q\x\x\q\q\q\q\x\x\q\x\x\q\mynl
\q\q\q\q\q\q\q\x\x\q\q\q\q\x\x\q\x\x\q\mynl
\q\q\q\q\q\q\q\q\x\x\q\q\q\q\x\x\q\x\x\q\mynl
\q\q\q\q\q\q\q\q\q\x\x\q\q\q\q\x\x\q\x\x\q\mynl
\x\q\q\q\q\q\q\q\q\q\x\x\q\q\q\q\x\x\q\x\x\q\mynl
\x\x\q\q\q\q\q\q\q\q\q\x\x\x\q\q\q\x\x\q\x\x\q\mynl
\q\x\x\q\q\q\q\q\q\q\q\q\x\x\x\q\q\q\x\x\q\x\x\q\mynl
\x\q\x\x\q\q\q\q\q\q\q\q\q\x\x\x\q\q\q\x\x\q\x\x\q\mynl
\x\x\q\x\x\q\q\q\q\q\q\q\q\q\x\x\x\q\q\q\x\x\q\x\x\q\mynl
\q\x\x\q\x\x\q\q\q\q\q\q\q\q\q\x\x\x\q\q\q\x\x\q\x\x\q\mynl
\q\q\x\x\q\x\x\q\q\q\q\q\q\q\q\q\x\x\x\q\q\q\x\x\q\x\x\q\mynl
\q\q\q\x\x\q\x\x\q\q\q\q\q\q\q\q\q\x\x\x\q\q\q\x\x\q\x\x\q\mynl
\q\q\q\q\x\x\q\x\x\q\q\q\q\q\q\q\q\q\x\x\x\q\q\q\x\x\q\x\x\q\mynl
\q\q\q\q\q\x\x\q\x\x\q\q\q\q\q\q\q\q\q\x\x\x\q\q\q\x\x\q\x\x\q\mynl
\q\q\q\q\q\q\x\x\q\x\x\q\q\q\q\q\q\q\q\q\x\x\x\q\q\q\x\x\q\x\x\q\mynl
\q\q\q\q\q\q\q\x\x\q\x\x\q\q\q\q\q\q\q\q\q\x\x\q\q\q\q\x\x\q\x\x\q\mynl
\q\q\q\q\q\q\q\q\x\x\q\x\x\q\q\q\q\q\q\q\q\q\x\x\q\q\q\q\x\x\q\x\x\q\mynl
\q\q\q\q\q\q\q\q\q\x\x\q\x\x\q\q\q\q\q\q\q\q\q\x\x\q\q\q\q\x\x\q\x\x\q\mynl
\x\q\q\q\q\q\q\q\q\q\x\x\q\x\x\q\q\q\q\q\q\q\q\q\x\x\q\q\q\q\x\x\q\x\x\q\mynl
\x\x\q\q\q\q\q\q\q\q\q\x\x\q\x\x\q\q\q\q\q\q\q\q\q\x\x\q\q\q\q\x\x\q\x\x\q\mynl
\x\x\x\q\q\q\q\q\q\q\q\q\x\x\q\x\x\q\q\q\q\q\q\q\q\q\x\x\q\q\q\q\x\x\q\x\x\q\mynl
\q\x\x\x\q\q\q\q\q\q\q\q\q\x\x\q\x\x\q\q\q\q\q\q\q\q\q\x\x\q\q\q\q\x\x\q\x\x\q\mynl
\q\q\x\x\x\q\q\q\q\q\q\q\q\q\x\x\q\x\x\q\q\q\q\q\q\q\q\q\x\x\q\q\q\q\x\x\q\x\x\q\mynl
\q\q\q\x\x\x\q\q\q\q\q\q\q\q\q\x\x\q\x\x\q\q\q\q\q\q\q\q\q\x\x\q\q\q\q\x\x\q\x\x\q\mynl
\x\q\q\q\x\x\x\q\q\q\q\q\q\q\q\q\x\x\q\x\x\q\q\q\q\q\q\q\q\q\x\x\q\q\q\q\x\x\q\x\x\q\mynl
\x\x\q\q\q\x\x\x\q\q\q\q\q\q\q\q\q\x\x\q\x\x\q\q\q\q\q\q\q\q\q\x\x\q\q\q\q\x\x\q\x\x\q\mynl
\q\x\x\q\q\q\x\x\x\q\q\q\q\q\q\q\q\q\x\x\q\x\x\q\q\q\q\q\q\q\q\q\x\x\q\q\q\q\x\x\q\x\x\q\mynl
\x\q\x\x\q\q\q\x\x\x\q\q\q\q\q\q\q\q\q\x\x\q\x\x\q\q\q\q\q\q\q\q\q\x\x\q\q\q\q\x\x\q\x\x\q\mynl
\x\x\q\x\x\q\q\q\x\x\x\q\q\q\q\q\q\q\q\q\x\x\q\x\x\q\q\q\q\q\q\q\q\q\x\x\q\q\q\q\x\x\q\x\x\q\mynl
\q\x\x\q\x\x\q\q\q\x\x\x\q\q\q\q\q\q\q\q\q\x\x\q\x\x\q\q\q\q\q\q\q\q\q\x\x\q\q\q\q\x\x\q\x\x\q\mynl
\q\q\q\q\q\q\q\q\q\x\x\q\x\x\q\q\q\q\x\q\q\q\q\q\q\q\q\q\q\x\x\q\x\x\q\q\q\x\q\q\q\q\q\q\q\q\q\x\mynl
\x\q\q\q\x\q\q\q\q\q\q\q\q\x\x\q\q\q\q\x\x\q\q\q\q\q\q\q\x\q\x\x\q\q\q\q\q\q\x\x\q\q\q\q\x\x\q\x\x\mynl
\x\q\x\x\q\q\q\q\q\q\q\x\x\q\x\x\q\q\q\q\q\q\q\q\q\q\q\q\q\q\q\x\x\q\x\q\q\q\q\q\q\q\x\x\q\q\q\q\x\x\mynl
\q\q\x\q\q\q\q\x\x\q\q\x\q\q\q\q\q\q\q\q\q\q\x\q\x\x\q\q\q\q\q\q\q\q\q\q\q\q\x\x\q\x\x\q\q\q\q\q\x\q\x\mynl
\q\x\q\q\q\q\x\x\q\q\x\q\q\q\q\x\x\q\x\x\q\q\q\x\x\q\q\q\q\q\q\q\q\q\q\x\q\x\x\q\q\q\q\q\q\q\q\q\q\x\x\x\mynl
\q\q\q\x\q\q\q\q\q\x\q\x\q\q\q\q\q\x\x\q\x\q\q\q\q\x\x\q\q\q\q\q\q\q\q\q\x\x\q\x\x\q\q\q\q\q\q\q\x\x\q\x\x\mynl
\q\q\q\q\x\q\x\x\q\q\q\q\x\x\q\q\q\q\q\q\q\x\q\q\q\q\x\x\q\q\q\q\q\q\q\x\q\q\q\q\x\q\q\x\x\q\q\q\q\x\x\x\q\x\mynl
\q\x\q\q\q\q\q\q\q\q\q\q\q\x\x\q\x\x\q\q\q\q\q\q\q\q\q\x\x\q\q\q\q\x\x\q\x\q\q\q\q\x\x\q\x\q\q\q\x\q\x\q\x\x\x\mynl
\q\q\q\q\q\x\x\q\x\x\q\q\q\q\x\x\q\q\q\q\q\q\q\q\q\q\x\q\x\x\q\q\q\q\q\q\q\q\q\q\q\q\x\q\q\x\x\q\x\x\q\x\x\q\x\x\mynl
\end{minipage}
\end{center}
\caption{\label{fig:R_4_8_57}A $(4, 8)$-coloring of $K_{57}$
extended from a $(4, 7)$-coloring of $K_{48}$.}
\end{figure}

We have developed a graphical tool to make/modify a Ramsey graph by hand
in its adjacency matrix representation.
Applying the tool to the Ramsey graph shown in Figure \ref{fig:R_4_7_48},
a more symmetric one was obtained for $R(4, 7, 48)$.
Namely, we could flip the color of $e_{0,10},~e_{9,13},$ and $e_{12,16}$
without violating any of the required constraints.
Thus, a Ramsey graph $R(4, 8, 56)$ was obtained based exactly on
Figure \ref{fig:R_4_7_48}, whereas $R(4, 8, 57)$ on the modified one.

\begin{theorem}
  \label{Thm:r_4_8_57}
  $R(4, 8)\ge 58$
\end{theorem}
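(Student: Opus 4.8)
The statement $R(4,8)\ge 58$ is equivalent to the assertion that some two-coloring of the edges of $K_{57}$ has no $K_4$ in the first color and no $K_8$ in the second; that is, a $(4,8)$-coloring of $K_{57}$ exists. The plan is therefore to exhibit one explicit witness and to reduce the theorem to a finite, mechanical verification, exactly as in Lemma~\ref{Lem:R_4_7_48}. The witness is the coloring encoded by Figure~\ref{fig:R_4_8_57}: reading its lower-triangular array, I would assign $e_{ij}$ (for $0\le i<j<57$) the first color precisely when the corresponding box is black and the second color when it is white, obtaining a graph $G$ on the vertex set $\{0,1,\dots,56\}$ whose first-color edges form $G$ and whose second-color edges form $\overline{G}$.

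It then remains to check two conditions: (i) $G$ contains no $K_4$, and (ii) $\overline{G}$ contains no $K_8$. Rather than enumerate all $\binom{57}{4}$ and $\binom{57}{8}$ subsets blindly, I would exploit the construction. The sub-array indexed by $\{0,\dots,47\}$ coincides with the (modified) $(4,7)$-coloring of $K_{48}$ of Figure~\ref{fig:R_4_7_48}, which by Lemma~\ref{Lem:R_4_7_48} has no first-color $K_4$ and no second-color $K_7$. Consequently the largest first-color clique inside $\{0,\dots,47\}$ has at most $3$ vertices and the largest second-color clique there at most $6$. Hence any first-color $K_4$ of $G$ must use at least one of the nine new vertices $\{48,\dots,56\}$, and any second-color $K_8$ of $G$ must meet $\{48,\dots,56\}$ in at least two vertices. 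This is precisely the reduction to the ``unsettled'' Ramsey clauses described before the statement, and it confines the search for both forbidden configurations to subsets anchored on the small new block.

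The routine part is condition (i): $\binom{57}{4}$ is small, and after the pruning only a modest collection of $4$-subsets, each containing a new vertex, need be inspected. \emph{The main obstacle is condition (ii)}, the absence of a second-color $K_8$, whose naive search space $\binom{57}{8}$ is far too large for a hand argument; here I would rely on the fact that, once the witness is fixed, the property is decidable by direct enumeration over the pruned family (equivalently, by confirming that the model returned by the customized solver satisfies every relevant Ramsey clause). Thus the substance of the proof is not a derivation but the certification of a single finite object, and I would present that object --- the coloring of $K_{57}$ derived from Figure~\ref{fig:R_4_8_57} --- as the proof itself.
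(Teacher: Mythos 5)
Your proposal is correct and takes essentially the same approach as the paper: the theorem is proved by exhibiting the explicit $(4,8)$-coloring of $K_{57}$ in Figure~\ref{fig:R_4_8_57} as a witness, with the verification being a finite mechanical check. Your additional pruning observation (that the restriction to $\{0,\dots,47\}$ is already a $(4,7)$-coloring, confining any forbidden clique to the nine new vertices) mirrors the paper's own reduction to the ``unsettled'' Ramsey clauses and adds nothing that changes the substance of the argument.
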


\begin{proof}
The proof is given by the coloring of $K_{57}$
which can be derived from Figure \ref{fig:R_4_8_57}.
\end{proof}

This coloring improves the lower bound for $R(4, 8)$ from 56 to 58.
The search started with 468 unsettled Ramsey variables + 56 Z-variables
and 3480171 unsettled Ramsey clauses + 936 Z-clauses.
It took $1.87\times 10^{6}$ seconds (about 21 days)
on Mac OS X 10.7.5,
Intel$^{\mbox{\textregistered}}$ CORE{\texttrademark} i7 2GHz,
with 8 GB memory.
The relaxations of Z-clauses were invoked two times,
50 \% reduction each time,
about 25 \% of the initial Z-clauses being kept and satisfied at last.

Related information can be found at the author's web site \cite{Fujita}.

%\vfill\newpage

\vfill\newpage

\subsection*{Appendix}

Figure \ref{fig:R_4_8_57_list} is the adjacency list version of
the Ramsey graph $R(4, 8, 57)$.
That is the color one graph of the two-coloring of $K_{57}$, and
all other edges are assigned color two.

%\begin{figure}[!h]
\begin{figure}[!ht]
\begin{center}
\begin{minipage}{0.55\textwidth}
%\scriptsize
\tiny
\begin{verbatim}
0: 2 3 5 6 11 12 22 23 25 26 36 37 38 42 43 45 46 49 50
1: 3 4 6 7 12 13 23 24 26 27 37 38 39 43 44 46 47 52 55
2: 0 4 5 7 8 13 14 24 25 27 28 38 39 40 44 45 47 50 51
3: 0 1 5 6 8 9 14 15 25 26 28 29 39 40 41 45 46 50 53
4: 1 2 6 7 9 10 15 16 26 27 29 30 40 41 42 46 47 49 54
5: 0 2 3 7 8 10 11 16 17 27 28 30 31 41 42 43 47 56
6: 0 1 3 4 8 9 11 12 17 18 28 29 31 32 42 43 44 52 54 56
7: 1 2 4 5 9 10 12 13 18 19 29 30 32 33 43 44 45 51 52 54
8: 2 3 5 6 10 11 13 14 19 20 30 31 33 34 44 45 46 51 56
9: 3 4 6 7 11 12 14 15 20 21 31 32 34 35 45 46 47 48 53 56
10: 4 5 7 8 12 13 15 16 21 22 32 33 35 36 46 47 48 52
11: 0 5 6 8 9 13 14 16 17 22 23 33 34 36 37 47 50 51 53
12: 0 1 6 7 9 10 14 15 17 18 23 24 34 35 37 38 48 50 54
13: 1 2 7 8 10 11 15 16 18 19 23 24 25 35 36 38 39 48 49 54 55
14: 2 3 8 9 11 12 16 17 19 20 24 25 26 36 37 39 40 49 50 55 56
15: 3 4 9 10 12 13 17 18 20 21 25 26 27 37 38 40 41 50 52 56
16: 4 5 10 11 13 14 18 19 21 22 26 27 28 38 39 41 42 52 55
17: 5 6 11 12 14 15 19 20 22 23 27 28 29 39 40 42 43 53 55
18: 6 7 12 13 15 16 20 21 23 24 28 29 30 40 41 43 44 48 52 53
19: 7 8 13 14 16 17 21 22 24 25 29 30 31 41 42 44 45 49 52
20: 8 9 14 15 17 18 22 23 25 26 30 31 32 42 43 45 46 49 53
21: 9 10 15 16 18 19 23 24 26 27 31 32 33 43 44 46 47 54
22: 0 10 11 16 17 19 20 24 25 27 28 32 33 34 44 45 47 51
23: 0 1 11 12 13 17 18 20 21 25 26 28 29 34 35 45 46 52
24: 1 2 12 13 14 18 19 21 22 26 27 29 30 35 36 46 47 51 52
25: 0 2 3 13 14 15 19 20 22 23 27 28 30 31 36 37 47 51 53
26: 0 1 3 4 14 15 16 20 21 23 24 28 29 31 32 37 38 53 54 56
27: 1 2 4 5 15 16 17 21 22 24 25 29 30 32 33 38 39 54 55
28: 2 3 5 6 16 17 18 22 23 25 26 30 31 33 34 39 40 49 55 56
29: 3 4 6 7 17 18 19 23 24 26 27 31 32 34 35 40 41 48 56
30: 4 5 7 8 18 19 20 24 25 27 28 32 33 35 36 41 42 48 49
31: 5 6 8 9 19 20 21 25 26 28 29 33 34 36 37 42 43 49 50
32: 6 7 9 10 20 21 22 26 27 29 30 34 35 37 38 43 44 48 50
33: 7 8 10 11 21 22 27 28 30 31 35 36 38 39 44 45 48 55
34: 8 9 11 12 22 23 28 29 31 32 36 37 39 40 45 46 50 55
35: 9 10 12 13 23 24 29 30 32 33 37 38 40 41 46 47 52 54
36: 0 10 11 13 14 24 25 30 31 33 34 38 39 41 42 47 53 55
37: 0 1 11 12 14 15 25 26 31 32 34 35 39 40 42 43 48 52 53
38: 0 1 2 12 13 15 16 26 27 32 33 35 36 40 41 43 44 49 51 52
39: 1 2 3 13 14 16 17 27 28 33 34 36 37 41 42 44 45 49 51 53
40: 2 3 4 14 15 17 18 28 29 34 35 37 38 42 43 45 46 53 54
41: 3 4 5 15 16 18 19 29 30 35 36 38 39 43 44 46 47 51 55
42: 0 4 5 6 16 17 19 20 30 31 36 37 39 40 44 45 47 50 51 55 56
43: 0 1 5 6 7 17 18 20 21 31 32 37 38 40 41 45 46 50 54
44: 1 2 6 7 8 18 19 21 22 32 33 38 39 41 42 46 47 49 54 55
45: 0 2 3 7 8 9 19 20 22 23 33 34 39 40 42 43 47 49 56
46: 0 1 3 4 8 9 10 20 21 23 24 34 35 40 41 43 44 56
47: 1 2 4 5 9 10 11 21 22 24 25 35 36 41 42 44 45 48 49
48: 9 10 12 13 18 29 30 32 33 37 47 49 50 51 53 55 56
49: 0 4 13 14 19 20 28 30 31 38 39 44 45 47 48 50 52 53 54 56
50: 0 2 3 11 12 14 15 31 32 34 42 43 48 49 51 52 54 55
51: 2 7 8 11 22 24 25 38 39 41 42 48 50 52 53 54 56
52: 1 6 7 10 15 16 18 19 23 24 35 37 38 49 50 51 53 55 56
53: 3 9 11 17 18 20 25 26 36 37 39 40 48 49 51 52 54 55
54: 4 6 7 12 13 21 26 27 35 40 43 44 49 50 51 53 55 56
55: 1 13 14 16 17 27 28 33 34 36 41 42 44 48 50 52 53 54 56
56: 5 6 8 9 14 15 26 28 29 42 45 46 48 49 51 52 54 55
\end{verbatim}
\end{minipage}
\end{center}
\caption{\label{fig:R_4_8_57_list}A $(4, 8)$-coloring of $K_{57}$.}
\end{figure}
\end{document}